\documentclass[twoside,11pt]{article}

\usepackage{blindtext}

%

\usepackage{jmlr2e} 

\usepackage{comment}

\usepackage{ulem}

\usepackage{microtype}
\usepackage{bm}
\usepackage{amsbsy}
\usepackage{multirow}

\usepackage{xcolor}

\usepackage{hyperref}       
\usepackage{url}            
\usepackage{booktabs}       
\usepackage{amsfonts}       
\usepackage{nicefrac}       
\usepackage{microtype}      
\usepackage{balance}

\usepackage{epsfig}
\usepackage{epstopdf}

\usepackage{mathtools}
\usepackage{algorithm}
\usepackage{algorithmic}

\usepackage{makecell}

\newcommand{\Osymbol}{{\mathcal O}}
\newcommand{\BO}[1]{\Osymbol\left(#1\right)}

\newcommand{\E}[1]{\mathbb{E} \left[#1\right]}
\newcommand{\Var}[1]{\mathrm{Var}\left[#1\right]}



\newcommand{\bX} {\mathbf{X}}

\newcommand{\bw} {\mathbf w}

\newcommand{\bx} {\mathbf x}
\newcommand{\by} {\mathbf y}

\newcommand{\bu} {\mathbf u}
\newcommand{\bv} {\mathbf v}


\renewcommand{\Pr}[1]{\textrm{\bf Pr}\left[#1\right]}

\newcommand{\mRd} {\mathbb{R}^d}
\newcommand{\mRD} {\mathbb{R}^D}
\newcommand{\mR} {\mathbb{R}}

\newcommand{\bY} {\mathbf{Y}}

\newcommand{\bxp} {\mathbf x^{(p)}}
\newcommand{\byp} {\mathbf y^{(p)}}

\newcommand{\bCx} {{\mathbf {Cx}}}

\newcommand{\bCy} {\mathbf {Cy}}

\newcommand{\bCxone} {\mathbf {C_1x}}
\newcommand{\bCxtwo} {\mathbf {C_2x}}

\newcommand{\dotxy}{\left\langle \bx, \by \right\rangle} 



\jmlrheading{x}{20xx}{x-xx}{x/xx}{xx/xx}{meila00a}{xxxx and xxxx}


\ShortHeadings{Tensor Sketch}{Pham and Pagh}
\firstpageno{1}

\begin{document}

\title{Tensor Sketch:\\Fast and Scalable Polynomial Kernel Approximation}

\author{\name Ninh Pham \email ninh.pham@auckland.ac.nz\\ 
\addr School of Computer Science\\ 
University of Auckland\\ 
Auckland, New Zealand
\AND 
\name Rasmus Pagh \email pagh@di.ku.dk\\ 
\addr Computer Science Department\\
University of Copenhagen\\ 
Copenhagen, Denmark} 

\editor{xxxx and xxxx}

\maketitle

\begin{abstract}

Approximation of non-linear kernels using random feature maps has become a powerful technique for scaling kernel methods to large datasets.
We propose \textit{Tensor Sketch}, an efficient random feature map for approximating polynomial kernels. 
Given $n$ training samples in $\mathbb{R}^d$ Tensor Sketch computes low-dimensional embeddings in $\mathbb{R}^D$ in time $\mathcal{O}\left( n(d+D \log{D}) \right)$ making it well-suited for high-dimensional and large-scale settings. 
We provide theoretical guarantees on the approximation error, ensuring the fidelity of the resulting kernel function estimates. 
We also discuss extensions and highlight applications where Tensor Sketch serves as a central computational tool.

\end{abstract}

\begin{keywords}
Polynomial kernel, SVM, Tensor Product, Count Sketches, FFT
\end{keywords}

\section{Introduction}\label{sec:introduction}

Kernel machines such as Support Vector Machines (SVMs)~\citep{Scholkopf02}, are powerful tools for a wide range of machine learning and data mining tasks. 
A key strength of kernel methods lies in their ability to capture non-linear structure in data through the use of kernel functions. These functions \textit{implicitly} map data from the original space to a high-dimensional feature space, where each coordinate corresponds to a feature of the input vectors. 
In this kernel space, many standard learning algorithms operate using only pairwise inner products, without explicitly computing the mapped coordinates. 
This implicit approach not only reduces computational overhead but also enables kernel methods to handle diverse data types, including both numeric and symbolic inputs, in a unified framework.

While kernel methods have achieved considerable success across a wide range of data analysis tasks~\citep{Taylor04}, their scalability remains a significant limitation. Kernel-based learning algorithms typically suffer from high computational and memory costs, with direct methods often requiring cubic time and quadratic space in the number of training samples~\citep{Scholkopf02}. 
This issue becomes increasingly problematic in modern machine learning applications that rely on large-scale datasets. In high-dimensional domains such as text, where data is often represented using sparse bag-of-words vectors, linear models -- particularly linear SVMs -- have demonstrated strong empirical performance~\citep{Joachims98,Lewis04}.
Consequently, there has been substantial research on designing training algorithms for linear SVMs that scale linearly with the number of training examples~\citep{Joachims06,Pegasos,Liblinear,Bubeck15,Zhu17,Acceleration}. 

Since non-linear SVMs with kernel functions can be interpreted as linear SVMs operating in a high-dimensional feature space,~\citet{RFF} first proposed \textit{random feature maps} for approximating shift-invariant kernels to combine the advantages of linear and non-linear SVM approaches.
Their approach approximates kernels by a randomized feature map from data space into a relatively low-dimensional feature space. 
In this randomized feature space, the kernel function of any two vectors is well approximated by their inner product with high probability.
In other words, the randomized feature map can be seen as a specific dimensionality reduction from the feature space, computed directly from vectors in data space.
This enables the use of fast linear learning algorithms to approximate the performance of non-linear kernel methods, significantly reducing training time while maintaining competitive generalization performance.

Given any two vectors $\bx = \left(x_1, \ldots, x_d \right), \by = \left(y_1, \ldots, y_d \right) \in \mR^d$, we denote their inner product by $\dotxy = \sum_{i=1}^d x_i y_i$. 
For an implicit \textit{feature space} map $\phi: \mathbb{R}^d \mapsto \mathcal{F}$ the inner product between vectors in the feature space $\mathcal{F}$ can be quickly computed as $\left\langle \phi(\bx), \phi(\by) \right\rangle = \kappa(\bx, \by)$ where $\kappa(\cdot)$ is an easily computable kernel function. 
A random feature map $f: \mathbb{R}^d \mapsto \mathbb{R}^D$, where $D$ is an integer parameter, can be used to approximate a kernel $\kappa(\cdot)$ if it satisfies
$$\mathbb{E}\left[\left\langle f(\bx), f(\by) \right\rangle\right] = \left\langle \phi(\bx), \phi(\by) \right\rangle = \kappa(\bx, \by), $$
with good concentration around the expected value.
Using such feature maps we can transform data from the original data space into a $D$-dimensional randomized feature space to efficiently approximate the solutions of learning algorithms on high-dimensional feature spaces.
This line of work enables kernel methods to handle massive datasets on many standard statistical learning tasks, including kernel ridge regression~\citep{Avron17,Avron17ICML}, support vector machines~\citep{RFF,OnlineKernel}, clustering~\citep{SpectralClustering,sDbscan} and dimensionality reduction~\citep{CCA,aKPCA}.

Randomized techniques for kernel approximation fall into two broad categories: data-dependent and data-independent methods. 
A prominent example of data-dependent approaches is  Nystr\"{o}m method~\citep{Nystrom00,Nystrom-RFF,Nystrom16}, which approximates the kernel matrix using a subset of training data.
More recent variants leverage statistical properties such as leverage scores of the kernel matrix to select informative samples for improved approximation~\citep{Avron17ICML,UnifiedRFF,RFF_AAAI20}. 

In contrast, data-independent random feature maps~\citep{RFF} approximate the entire kernel function -- not just the matrix -- using features drawn from fixed distributions. This makes them particularly well-suited for online learning and streaming settings, where access to the full training set may be limited. Understanding the theoretical and practical properties of such random features has been the focus of extensive research over the past decade~\citep{ Sriperumbudur15,Bach17,Rudi17,Sun18,UnifiedRFF}.
These efforts have extended to downstream machine learning applications~\citep{KME,RFFSurvey}, as well as to scaling deep learning architectures and analyzing over-parameterized neural networks ~\citep{NN_RFF19,NTK_RFF,Performers,RFF_NN23}.

{\bf Our contribution.}
This paper investigates algorithmic aspects of data-independent random features, with a focus on polynomial kernels.
Although many modern learning algorithms can be trained in time linear in the number of samples~\citep{Bubeck15}, the cost of computing random features often becomes a bottleneck. 
Specifically, many existing kernel approximation methods require time and space proportional to the product of the data dimension $d$ and the number of random features $D$. 
For instance,~\citet{RFF} maintained $D$ random vectors $\bw_1, \ldots, \bw_D \in \mathbb{R}^d$, requiring 
$\BO{dD}$ time and space to compute 
$D$ features. 
When $D = \BO d$, this leads to quadratic costs, which may exceed the time spent in the actual learning or prediction phase. As a result, the cost of the random mapping itself can dominate the overall runtime of kernel-based methods.

We study a near-linear time random feature mapping for approximating the standard polynomial kernel $\kappa(\bx, \by) = \left(c + \left\langle \bx, \by \right\rangle \right)^p$ for an integer $p \geq 1$ and a real $c \geq 0$.
The polynomial kernel is a widely used example of a \textit{non-stationary} kernel, and it serves as a building block for a broader class of kernels that can be well approximated by polynomial expansions, including Gaussian kernels, general dot product kernels, arc-cosine and sigmoid kernels~\citep{TaylorPolyKernel}.
Unlike shift-invariant kernels, polynomial kernels in $\mRd$ do not admit representations via spherical harmonics, and thus require different techniques for constructing random feature maps~\citep{Schoenberg,Bochner}.

In this paper we present \textit{Tensor Sketch}, a scalable random feature map for polynomial kernels that combines the explicit feature mapping via tensor products with an efficient sketching technique to approximate these products.
Given a dataset of $n$ points, Tensor Sketch computes 
$D$-dimensional feature embeddings in time $\BO{n(d + D\log{D})}$ and requires $\BO{1}$ extra space to store the randomness of the sketch.
The core technical insight is a connection between tensor products and the fast convolution structure of Count Sketch~\citep{CountSketch, Pagh13}, which allows for substantial reductions in both computational and memory costs. Empirical evaluations demonstrate that Tensor Sketch achieves high approximation accuracy while outperforming prior methods by orders of magnitude in runtime on large-scale datasets. A preliminary version of this work appeared in the 19th ACM SIGKDD International Conference on Knowledge Discovery and Data Mining~\citep{TS}.

{\bf Later developments.}
Since its initial publication, Tensor Sketch has seen widespread adoption across multiple domains. It has been used to accelerate machine learning algorithms~\citep{cui2017kernel, Dai17} and standard statistical learning tasks~\citep{Avron14, Wang15,diao2018sketching, Draief18}, as well as to improve computational efficiency in various computer vision applications~\citep{Gao16, Fukui16}. Notably, Tensor Sketch has been integrated into widely used libraries such as scikit-learn~\citep{scikit}, where it is available as \textsc{PolynomialCountSketch}.~\footnote{\url{https://scikit-learn.org/stable/modules/generated/sklearn.kernel_approximation.PolynomialCountSketch.html}}

In this extended version, we introduce several important additions and refinements: 
\begin{itemize}
	\item A survey of state-of-the-art techniques for approximating polynomial kernels, along with an overview of recent applications that utilize Tensor Sketch as a core component.
    \item A revised theoretical analysis of the approximation error.
\footnote{Lemma 6 in the conference version contained an error; we correct the dependence on 
$p$.}
In particular, our main result (Theorem~\ref{thm:main}) establishes a variance bound on the inner product of two Tensor Sketch vectors, providing formal guarantees on the reliability and accuracy of the approximation.
\end{itemize}


The organization of the paper is as follows. 
In Section~\ref{sec:related}, we briefly review related work and recent applications using Tensor Sketch as a core algorithmic component. 
Section~\ref{sec:background} describes background and preliminaries. 
Tensor Sketch is presented and analyzed in Section%
~\ref{sec:tensorsketch}. 

\section{Related Work}\label{sec:related}

{\bf Decomposition methods.}
Traditional techniques for training non-linear SVMs on large-scale datasets include decomposition methods~\citep{Osuna97, Scholkopf02, Libsvm}. These algorithms partition the training data into a working set and a fixed set, and iteratively optimize the dual objective with respect to the working set while holding the fixed set constant. In effect, they perform coordinate ascent over subsets of the dual variables, updating only a small number of coefficients per iteration until the Karush-Kuhn-Tucker (KKT) conditions are satisfied within a specified tolerance. While decomposition methods avoid the memory overhead of constructing the full kernel matrix, they rely on repeated numerical optimization steps, which can be computationally expensive for large datasets.

{\bf Data-dependent feature maps.}
To scale kernel methods to large datasets, various techniques have been developed to efficiently approximate the kernel matrix, notably Nystr\"{o}m methods~\citep{Nystrom00,Kumar12,Nystrom16,Avron17ICML,UnifiedRFF,RFF_AAAI20}.
These \textit{data-dependent} approximation methods aim to reduce the time and space complexity while maintaining approximation quality.
Nystr\"{o}m  methods sample a subset of columns (or data points) from the kernel matrix, typically using a distribution informed by the data, and compute a rank-$k$ approximation in time $\BO{nk^2 + k^3}$. Leverage-score-based methods~\citep{UnifiedRFF} proposed a leverage weighted scheme to construct refined random feature maps, achieving kernel approximation in $\BO{nD^2 + D^3}$ time for 
$D$ features.

{\bf Data-independent feature maps.}
Instead of approximating the kernel matrix, \citet{RFF} introduced \textit{Random Fourier Features} (RFF) to approximate shift-invariant kernel functions, including Gaussian, Laplacian, and Cauchy kernels. Since this method approximates the entire kernel function, not just the kernel matrix, it is especially relevant for out-of-sample prediction and online learning scenarios.
The method is based on Bochner’s theorem~\citep{Bochner}, which relates any continuous shift-invariant positive-definite kernel to the Fourier transform of a probability distribution. RFF uses Monte Carlo sampling to approximate the associated integral, mapping input data to a randomized low-dimensional feature space where inner products approximate the kernel values.

This seminal work has opened a new direction for kernel approximation techniques and has been widely adopted in machine learning applications. Since its introduction, hundreds of subsequent research papers have built on RFF, advancing both its algorithmic performance and theoretical guarantees. On the algorithmic side, researchers have proposed more efficient constructions of RFF and studied their variance reduction properties. On the theoretical side, work has focused on the approximation error of the kernel matrix, the expected risk of learning algorithms, and their generalization properties when using RFF. We refer readers to the survey by~\citet{RFFSurvey} and references therein for a comprehensive overview.

\begin{table} [t]\label{Tab:Summary}
	\centering	
	\caption{A summary of recent works on approximating polynomial kernels $\left(c + \left\langle \bx, \by \right\rangle \right) ^p$ where $c \geq 0, p = \BO{1}$, their construction time and extra memory to store the randomness.}
	\begin{tabular}{|c|c|c|c|} 
		\hline
		Technique & Reference & Time & Memory \\
		\hline
		\multirow{2}{*}{Sketching} & ~\citet{Kar12}	& \multirow{2}{*}{$\BO{dD}$} & \multirow{2}{*}{$\BO{D}$} \\
        & ~\citet{Meister19} & &	 \\
        \hline
	Sketching		& ~\citet{Hamid14}	& $\BO{dD + D\log{D}}$ & $\BO{D}$ \\
        \hline
		\multirow{2}{*}{Monte Carlo method}		& ~\citet{Pennington15}	& \multirow{2}{*}{$\BO{dD}$} & \multirow{2}{*}{$\BO{dD}$} \\
        & ~\citet{Krein} & &
         \\ 
         \hline
         
	 Sketching & \textbf{Tensor Sketch}	& $\pmb{\BO{D\log{D}}}$ & \bm $\pmb{\BO{1}}$ \\
		\hline
	\end{tabular}
	\label{tab:summary}
\end{table}

{\bf Polynomial kernels.}
Our work focuses on developing data-independent random feature maps for polynomial kernels, a class of non-stationary kernels of the form
$$\kappa(\bx, \by) = \left(c + \left\langle \bx, \by \right\rangle \right)^p \, ,$$ for an integer $p \geq 1$ and a constant $c \geq 0$.
Unlike shift-invariant kernels, polynomial kernels defined over 
$\mRd$ do not admit spectral representations via spherical harmonics, making standard random feature techniques inapplicable. As a result, constructing efficient random feature maps for polynomial kernels typically requires tools from sketching and dimensionality reduction in linear algebra~\citep{Woodruff_Book}.
We provide a detailed review of recent methods for approximating polynomial kernels, highlighting the advantages and limitations of each approach. Table~\ref{tab:summary} summarizes these methods along with their time and space complexities.

~\citet{Kar12} proposed random feature maps based on the Maclaurin series expansion of \textit{inhomogeneous} polynomial kernels, i.e., $\kappa(\bx, \by) = \left(c + \left\langle \bx, \by \right\rangle \right) ^p$.
Their method approximates the \textit{homogeneous} polynomial kernels $\kappa(\bx, \by) = \left\langle \bx, \by \right\rangle ^p$ by $$\left( \prod_{i=1}^{p}{\left\langle \bw_i, \bx \right\rangle} \right) \left( \prod_{i=1}^{p}{\left\langle \bw_i, \by \right\rangle} \right) \, , $$ where each $\bw_i \in \left\{ -1, 1 \right\}^d$ is a Rademacher vector.
To construct one random feature for inhomogeneous kernels, the method randomly selects a degree-$t$ term $\left\langle \bx, \by \right\rangle ^t$ with probability 
$1/2^{t+1}$, corresponding to the 
$t$-th order term in the Taylor expansion, and uses 
$t$ Rademacher vectors to construct the feature. The total runtime is 
$\BO{dD}$, where 
$D$ is the number of random features.
This approach was later referred to as \textit{Tensorized Random Projection} by \citet{Meister19}, who also provided improved high-probability error bounds (see Theorem 2.2).

~\citet{Hamid14} proposed CRAFTMaps which build on the Maclaurin-based method by incorporating fast Johnson-Lindenstrauss transforms~\citep{FJLT,Tropp11}.
It first constructs $D' = \BO{D}$ random features using the Maclaurin-based method~\citep{Kar12}, and then applies Subsampled randomized Hadamard (SRHT) to reduce the dimensionality from
$D'$ to $D$ features in $\BO{D \log {D}}$ time.
The overall runtime is 
$\BO{dD+D\log{D}}$, where the first term accounts for feature construction and the second for the projection.

\citet{Pennington15} proposed random feature maps for a specific class of polynomial kernels defined on the unit sphere. 
It employs mixtures of Gaussian distributions to generate 
$D$ spherical random Fourier features, which approximate the non positive-definite kernel $$\kappa(\bx, \by) = \left( 1 - \frac{1}{c} + \frac{1}{c}\left\langle \bx, \by \right\rangle \right)^p \, ,$$ where $c \geq 2$ and $\| \bx \|_2 = \| \by \|_2 = 1$.
Like standard random Fourier methods, this construction requires 
$\BO{dD}$
 time and storage. Although this method introduces bias in the kernel approximation, \citet{Krein} later demonstrated how to construct unbiased estimators for a broader class of non positive-definite kernels, also with 
$\BO{dD}$ computational complexity.

Building on our work on Tensor Sketch,~\citet{ahle2020oblivious} introduced a method that improves the dependence on the polynomial degree 
$p$. Specifically, their approach requires an embedding dimension 
$D$ that scales polynomially with 
$p$ to achieve a target error, in contrast to the exponential dependence in the original Tensor Sketch. 
While Tensor Sketch can be followed by a separate dimension reduction step, ~\citet{ahle2020oblivious} show that it is more efficient to interleave sketching and projection through a tree-structured composition of tensoring operations. This recursive strategy maintains computational efficiency and allows the overall approximation error to be tightly bounded.

\section{Background and Preliminaries}\label{sec:background}

This section introduces tensor powers, the explicit feature map for polynomial kernels, which expands input vectors into a space of exponentially growing dimension. To address the computational challenges arising from this expansion, we review key sketching techniques including AMS Sketches and Count Sketches, which serve as efficient random projection methods for approximating inner products in the high-dimensional kernel space.

\subsection{Notation}

For an integer $d$ we use $[d]$ to denote the set $\{1,\dots,d\}$.
Consider a vector $\bx = \left(x_1, \ldots, x_d \right) \in \mathbb{R}^d$.
For $q > 0$ the $\ell_q$ norm of $\bx$ is defined as
$\| \bx \|_q = \left(\sum_{i=1}^d |x_i|^q \right)^{1/q}$.
%
Given $\by = \left(y_1, \ldots, y_d \right) \in \mathbb{R}^d$, we define
$\left\langle \bx, \by \right\rangle = \sum_{i=1}^d x_i y_i$.
%
%
The 2nd \textit{tensor power} of $\bx$ (i.e. outer product $\bx \otimes \bx$)  denoted by $\bx^{(2)}$, is defined as the vector with entries: 
$$
\bx^{(2)} = \bx \otimes \bx = 
\begin{bmatrix}
x_1x_1 & x_1x_2 & \cdots & x_1x_d \\
x_2x_1 & x_2x_2 & \cdots & x_2x_d \\
\vdots & \vdots & \ddots & \vdots \\
x_dx_1 & x_dx_2 & \cdots & x_dx_d 
\end{bmatrix} \enspace .$$
Though it is depicted in matrix form, we will think about $\bx^{(2)}$ as a vector in $\mathbb{R}^{d^2}$ (with any fixed ordering of the entries).
Generally, given an integer $p > 1$ we consider a $p$-th tensor power~\footnote{$p$-th tensor power of $\bx$: $\bx^{(p)} = \bx \underbrace{\otimes \cdots \otimes}_{\text{$p$ times}} \bx $.} 
$\bx^{(p)}$ indexed by vectors in $[d]^p$, such that:
$$ \bx^{(p)}_{(i_1,\dots,i_p)} = \prod_{j=1}^p x_{i_j} \enspace. $$
%

\subsection{Tensor Powers as Feature Maps}

\citet[Proposition~2.1]{Scholkopf02} justifies that tensor power is an explicit feature map for the homogeneous polynomial kernel.
\begin{lemma}
	\label{lem:TP1}
	Given any pair of vectors $\bx, \by \in \mRd$ and an integer $p \geq 1$, we have:
	\begin{displaymath}
	\begin{aligned}
	\left\langle \bx^{(p)}, \by^{(p)} \right\rangle &= \left\langle \bx, \by \right\rangle^p \enspace.\\
	\end{aligned}
	\end{displaymath}
\end{lemma}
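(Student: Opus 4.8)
The plan is to compute $\left\langle \bx^{(p)}, \by^{(p)} \right\rangle$ directly from the coordinate description of the tensor power and then recognize the result as the $p$-th power of the dot product. Recall that $\bx^{(p)}$ is indexed by tuples $(i_1,\dots,i_p) \in [d]^p$ with $\bx^{(p)}_{(i_1,\dots,i_p)} = \prod_{j=1}^p x_{i_j}$, and similarly for $\by^{(p)}$. So the inner product is a sum over all such tuples:
\begin{displaymath}
\left\langle \bx^{(p)}, \by^{(p)} \right\rangle = \sum_{(i_1,\dots,i_p) \in [d]^p} \left( \prod_{j=1}^p x_{i_j} \right) \left( \prod_{j=1}^p y_{i_j} \right) = \sum_{(i_1,\dots,i_p) \in [d]^p} \prod_{j=1}^p x_{i_j} y_{i_j} \enspace .
\end{displaymath}

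The next step is to observe that summing a product over a product index set factors into a product of sums — this is exactly the distributive law, or equivalently the identity $\sum_{i_1}\cdots\sum_{i_p} \prod_{j=1}^p a_{j,i_j} = \prod_{j=1}^p \left( \sum_{i} a_{j,i} \right)$ applied with $a_{j,i} = x_i y_i$ (note $a_{j,i}$ does not actually depend on $j$ here). This yields
\begin{displaymath}
\sum_{(i_1,\dots,i_p) \in [d]^p} \prod_{j=1}^p x_{i_j} y_{i_j} = \prod_{j=1}^p \left( \sum_{i=1}^d x_i y_i \right) = \left( \sum_{i=1}^d x_i y_i \right)^p = \left\langle \bx, \by \right\rangle^p \enspace ,
\end{displaymath}
which is the claimed identity. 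Alternatively, one can run a clean induction on $p$: the base case $p=1$ is the definition of the inner product, and for the inductive step one uses the recursive structure $\bx^{(p)} = \bx^{(p-1)} \otimes \bx$ together with the mixed-product property $\left\langle \bu \otimes \bv, \bu' \otimes \bv' \right\rangle = \left\langle \bu, \bu' \right\rangle \left\langle \bv, \bv' \right\rangle$, giving $\left\langle \bx^{(p)}, \by^{(p)} \right\rangle = \left\langle \bx^{(p-1)}, \by^{(p-1)} \right\rangle \left\langle \bx, \by \right\rangle = \left\langle \bx, \by \right\rangle^{p-1} \left\langle \bx, \by \right\rangle$.

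There is essentially no obstacle here — the statement is elementary and the only thing to be slightly careful about is bookkeeping with the multi-index notation, making sure the fixed ordering of entries used to view $\bx^{(p)}$ as a vector in $\mathbb{R}^{d^p}$ is the same for $\bx^{(p)}$ and $\by^{(p)}$ (it is, by convention), so that the inner product pairs up matching tuples. I would present the direct factoring argument as the main proof since it is the shortest and most transparent, and perhaps remark that it also follows from the mixed-product property of the tensor product applied inductively.
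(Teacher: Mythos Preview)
Your proof is correct and is the standard direct computation; the paper itself does not supply a proof of this lemma but simply attributes it to \citet[Proposition~2.1]{Scholkopf02}. The factoring argument you give (and the inductive alternative via the mixed-product property) is exactly how one would prove that proposition, so there is nothing to compare against and nothing to fix.
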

%


Since the tensor power map requires $d^p$ dimensions, it is not feasible to explicitly compute data coordinates in kernel space for high-dimensional vectors.

\subsection{AMS Sketches}

Most sketching approaches require $k$-wise independent hash families, defined as follows.

\begin{definition}
	A family $\mathcal{H}$ of functions $f: [u] \rightarrow [r]$ is $k$-wise independent if for every set $I\subseteq [u]$ of $|I|=k$ elements, and a random $f\in\mathcal{H}$ the vector of hash values $(f(i))_{i\in I}$ is uniform in $[r]^k$.
\end{definition}

\citet{AMS} described and analyzed a sketching approach, referred to as the \textit{AMS Sketch}, to estimate the second frequency moment of a high-dimensional vector.
AMS Sketch samples random functions from a 4-wise independent family. 
Such samples can be generated efficiently by storing $\BO{1}$ random integers, and the sampled functions can be evaluated in $\BO{1}$ time~\citep{Carter79}.

\begin{definition}	
	Given $s: [d] \mapsto \left\{-1, 1\right\}$ sampled from a 4-wise independent family, an \textit{AMS Sketch} $Z(\bx)$ of a vector $\bx = \left(x_1, \ldots, x_d \right) \in \mathbb{R}^d$ is computed as $Z(\bx) = Z_s(\bx) = \sum_{i=1}^{d} s(i)\, x_i$.
\end{definition}

We use the subscript $s$ to describe the corresponding hash function $s$ of the AMS Sketch.
We will skip it when the context is clear.
\citet{AMS} analyze the bias and variance of AMS Sketch.
\begin{lemma}(\citet[Theorem~2.2]{AMS})\label{lem:AMS}
	Consider an AMS sketch $Z(\bx)$ for $\bx \in \mathbb{R}^d$. We have $\E{Z(\bx)^2} = \|\bx\|_2^2$ and $\Var{Z(\bx)^ 2} \leq 2\,\|\bx\|_2^4$.
\end{lemma}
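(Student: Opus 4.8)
The plan is to prove Lemma~\ref{lem:AMS} by direct computation of the first and second moments of $Z(\bx)^2$, using only the 4-wise independence of the hash function $s$. First I would expand $Z(\bx)^2 = \left(\sum_{i=1}^d s(i) x_i\right)^2 = \sum_{i=1}^d \sum_{j=1}^d s(i) s(j) x_i x_j$ and take expectations term by term. Since $s(i) \in \{-1,1\}$ we have $s(i)^2 = 1$ always, so the diagonal terms contribute $\sum_i x_i^2 = \|\bx\|_2^2$. For the off-diagonal terms $i \neq j$, 2-wise independence (which follows from 4-wise independence) gives $\E{s(i)s(j)} = \E{s(i)}\E{s(j)} = 0$, so these vanish. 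This yields $\E{Z(\bx)^2} = \|\bx\|_2^2$.

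For the variance, I would use $\Var{Z(\bx)^2} = \E{Z(\bx)^4} - \left(\E{Z(\bx)^2}\right)^2 = \E{Z(\bx)^4} - \|\bx\|_2^4$. The main work is expanding $\E{Z(\bx)^4} = \sum_{i,j,k,l} \E{s(i)s(j)s(k)s(l)}\, x_i x_j x_k x_l$, where the sum runs over all of $[d]^4$. Here 4-wise independence is essential: it ensures $\E{s(i)s(j)s(k)s(l)} = \E{s(i)}\E{s(j)}\E{s(k)}\E{s(l)} = 0$ unless the indices pair up (each distinct value appears an even number of times among $i,j,k,l$), in which case the sign products square to $1$ and the expectation is $1$. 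The surviving index patterns are: all four equal ($i=j=k=l$), contributing $\sum_i x_i^4$; and the three ways to split into two distinct pairs (e.g.\ $i=j \neq k=l$), each contributing $\sum_{i \neq k} x_i^2 x_k^2$. Counting carefully, one gets $\E{Z(\bx)^4} = \sum_i x_i^4 + 3\sum_{i \neq k} x_i^2 x_k^2 = 3\left(\sum_i x_i^2\right)^2 - 2\sum_i x_i^4 = 3\|\bx\|_2^4 - 2\|\bx\|_4^4$.

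Subtracting, $\Var{Z(\bx)^2} = 3\|\bx\|_2^4 - 2\|\bx\|_4^4 - \|\bx\|_2^4 = 2\|\bx\|_2^4 - 2\|\bx\|_4^4 \leq 2\|\bx\|_2^4$, using $\|\bx\|_4^4 \geq 0$. The main obstacle—though it is more bookkeeping than conceptual difficulty—is correctly enumerating which of the $d^4$ monomials survive the expectation and with what multiplicity; the key identity to keep straight is that among the patterns where indices pair into two pairs, there are exactly $3$ (corresponding to the $3$ perfect matchings of a $4$-element set) and one must avoid double-counting the all-equal case. Everything else is elementary algebra with the two norm identities $\sum_{i\neq k} x_i^2 x_k^2 = \|\bx\|_2^4 - \|\bx\|_4^4$.
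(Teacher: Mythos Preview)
Your proof is correct and is exactly the standard second-moment computation from the original AMS paper. Note, however, that the paper does not actually prove Lemma~\ref{lem:AMS}; it simply states the result and attributes it to \citet[Theorem~2.2]{AMS}, so there is no in-paper argument to compare against.
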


%

%
%

\subsection{Count Sketch}

\citet{CountSketch} described and analyzed a sketching approach, called \textit{Count Sketch}, to estimate the frequency of items in a data stream. 
\begin{definition}\label{def:count-sketch}
Given $h: [d] \mapsto [D]$ sampled from a 2-wise independent family, and $s: [d] \mapsto \left\{-1, 1\right\}$ sampled from a 4-wise independent family, a \textit{Count Sketch} $\bCx$ of a vector $\bx = \left(x_1, \ldots, x_d \right) \in \mathbb{R}^d$ is computed as $\bCx = \left((Cx)_1, \ldots, (Cx)_D \right) \in \mathbb{R}^D$ where $(Cx)_k = \sum_{i: h(i)=k}{s(i)\, x_i}.$
\end{definition}

The Count Sketch definition above is a slight relaxation of the original definition proposed by~\citet{CountSketch} where we require the hash function $s$ to be sampled from a \textit{4-wise} independent family.
\citet{Weinberger09} introduced a variant of Count Sketch as a feature hashing method for large-scale multi-task learning, leveraging its ability to approximately preserve inner products.
Inspired by this property, our work applies Count Sketch in a similar fashion, but instead of operating in the input space, we use it to project implicitly into the feature space of polynomial kernels, without explicitly constructing the high-dimensional feature vectors.

Count Sketch can be seen as a \textit{random projection} technique because it computes linear projections of $\bx$ with random vectors implicitly defined by hash functions $h$ and $s$.
The following lemma provides the bias and variance on inner product of Count Sketches.

\begin{lemma}\label{lem:CS}
Given vectors $\bx, \by \in \mathbb{R}^d$, denote by $\bCx, \bCy \in \mathbb{R}^D$ the respective Count Sketches of $\bx, \by$ based on the hash functions $h, s$. 
Then we have:
\begin{displaymath}
\begin{aligned}
\E{\left\langle \bCx, \bCy \right\rangle} &= \left\langle \bx, \by \right\rangle ,\\
\Var{\left\langle \bCx, \bCy \right\rangle}  &= \frac{1}{D} \left( \sum_{i \neq j}{x_i^2 y_j^2} + \sum_{i \neq j}{x_iy_ix_jy_j}  \right) \leq \frac{2}{D} \, \|\bx\|_2^2 \, \|\by\|_2^2 \enspace.
%
\end{aligned}
\end{displaymath}
\end{lemma}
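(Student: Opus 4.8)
\textbf{Proof proposal for Lemma~\ref{lem:CS}.}

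The plan is to reduce everything to computations with the indicator random variables that arise from the hash functions, using only $2$-wise independence of $h$ and $4$-wise independence of $s$. First I would write the inner product explicitly: since $(Cx)_k = \sum_{i:h(i)=k} s(i)\,x_i$, expanding the product over the $D$ buckets and collapsing the bucket sum gives
\begin{displaymath}
\left\langle \bCx, \bCy \right\rangle = \sum_{i=1}^d \sum_{j=1}^d x_i y_j\, s(i) s(j)\, \I[h(i)=h(j)] \enspace,
\end{displaymath}
because the contribution of a pair $(i,j)$ to bucket $k$ survives only when $h(i)=h(j)=k$. Splitting into the diagonal ($i=j$) and off-diagonal ($i\neq j$) parts, the diagonal terms contribute $\sum_i x_i y_i\, s(i)^2 = \sum_i x_i y_i = \langle \bx,\by\rangle$ deterministically (since $s(i)^2=1$), and for the off-diagonal terms one uses $\E{s(i)s(j)}=0$ for $i\neq j$ ($2$-wise independence of $s$ suffices here) to kill the expectation. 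This yields $\E{\langle \bCx,\bCy\rangle} = \langle \bx,\by\rangle$.

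For the variance I would compute $\E{\langle \bCx,\bCy\rangle^2}$ by squaring the double sum, obtaining a quadruple sum over $(i,j,k,\ell)$ with factor $x_i y_j x_k y_\ell$ and random part $s(i)s(j)s(k)s(\ell)\,\I[h(i)=h(j)]\I[h(k)=h(\ell)]$. The key observation is that $\E{s(i)s(j)s(k)s(\ell)}$ (by $4$-wise independence of $s$) is nonzero only when the multiset $\{i,j,k,\ell\}$ pairs up perfectly, i.e. when $i=j,k=\ell$, or $i=k,j=\ell$, or $i=\ell,j=k$; in each such surviving case the $s$-expectation equals $1$. One then separates out the ``fully diagonal'' case $i=j=k=\ell$ (which reconstructs part of $\langle\bx,\by\rangle^2$), and for the remaining three pairing patterns with two distinct indices one uses $2$-wise independence of $h$: when $h$ is evaluated on two distinct arguments, $\Pr{h(a)=h(b)}=1/D$. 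Careful bookkeeping of which pairing forces which collision event — e.g. the pattern $i=j,\ k=\ell,\ i\neq k$ gives $\I[h(i)=h(i)]\I[h(k)=h(k)]=1$ and contributes $\sum_{i\neq k} x_i y_i x_k y_k$ to $\E{\langle\bCx,\bCy\rangle^2}$, hence after subtracting $\E{\cdot}^2$ it contributes $-\langle\bx,\by\rangle^2 + \sum_i x_i^2 y_i^2 + \sum_{i\neq k}x_iy_ix_ky_k$-type terms that need to be tracked — produces exactly $\frac{1}{D}\big(\sum_{i\neq j} x_i^2 y_j^2 + \sum_{i\neq j} x_i y_i x_j y_j\big)$ after cancellation against $\langle\bx,\by\rangle^2$. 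Finally the bound follows from $\sum_{i\neq j} x_i^2 y_j^2 \le \|\bx\|_2^2\|\by\|_2^2$ and, by Cauchy--Schwarz, $\sum_{i\neq j} x_i y_i x_j y_j \le \big(\sum_i x_i y_i\big)^2 \le \|\bx\|_2^2\|\by\|_2^2$ (or more simply $\sum_{i,j} x_iy_ix_jy_j = \langle\bx,\by\rangle^2 \ge 0$ and bounding the off-diagonal part), giving the factor $2/D$.

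The main obstacle is purely combinatorial accounting rather than any deep idea: organizing the quadruple sum by the four regimes (all indices equal; two-distinct with each of the three pairings; and verifying that patterns with three or four distinct indices vanish by the $s$-expectation) and making sure the diagonal overlaps between different regimes are neither double-counted nor dropped, so that the subtraction of $\E{\langle\bCx,\bCy\rangle}^2 = \langle\bx,\by\rangle^2$ cancels precisely the right terms. I would handle this by introducing inclusion–exclusion over the equality pattern of $(i,j,k,\ell)$ explicitly, i.e. first sum over all tuples in each pairing pattern \emph{without} the distinctness constraint and then subtract the fully-diagonal overcount, which is the cleanest way to avoid sign errors. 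No concentration inequality is needed for the lemma as stated; it is just first and second moment computations.
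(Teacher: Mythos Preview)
Your proposal is correct and follows essentially the same approach as the paper: expand $\langle \bCx,\bCy\rangle$ as a double sum with the collision indicator $\I[h(i)=h(j)]$, split off the diagonal to get unbiasedness, and for the variance square the expression and use $4$-wise independence of $s$ to reduce to the three pairing patterns, together with $\Pr{h(i)=h(j)}=1/D$ from $2$-wise independence of $h$. The only cosmetic difference is that the paper first isolates the deterministic diagonal $\langle\bx,\by\rangle$ and then squares $(\langle\bx,\by\rangle+\text{off-diagonal})$, whereas you organize the full quadruple sum by equality patterns; both routes lead to the same two surviving off-diagonal contributions and the same $2/D$ bound via Cauchy--Schwarz.
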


\begin{proof}
Define the indicator variable $\xi_{ij} = \mathbb{I}[h(i) = h(j)]$. Then we can write:
\[
\langle \bCx, \bCy \rangle = \sum_{i,j} x_i y_j s(i)s(j) \xi_{ij} = \langle \bx, \by \rangle + \sum_{i \ne j} x_i y_j s(i) s(j) \xi_{ij} \, .
\]
Taking expectation and using independence:
\[
\mathbb{E}[\langle \bCx, \bCy \rangle] = \langle \bx, \by \rangle + \sum_{i \ne j} x_i y_j \, \mathbb{E}[s(i)] \, \mathbb{E}[s(j)] \, \mathbb{E}[\xi_{ij}] = \langle \bx, \by \rangle,
\]
since $\mathbb{E}[s(i)] = 0$ and $\mathbb{E}[\xi_{ij}] = \Pr{h(i) = h(j)} = \frac{1}{D}$.
For the variance, compute $\mathbb{E}[\langle \bCx, \bCy \rangle^2]$:
\begin{align*}
\langle \bCx, \bCy \rangle^2 
&= \left( \langle \bx, \by \rangle + \sum_{i \ne j} x_i y_j s(i) s(j) \xi_{ij} \right)^2 \\
&= \langle \bx, \by \rangle^2 + 2 \langle \bx, \by \rangle \sum_{i \ne j} x_i y_j s(i) s(j) \xi_{ij} + \left( \sum_{i \ne j} x_i y_j s(i) s(j) \xi_{ij} \right)^2.
\end{align*}

The middle term has zero expectation due to independence and zero-mean of $s(i)$. Expanding the last term, and keeping only terms whose expectation survives due to 4-wise independence of $s$, we get:
\[
\mathbb{E}\left[\left( \sum_{i \ne j} x_i y_j s(i) s(j) \xi_{ij} \right)^2\right] = \frac{1}{D} \left( \sum_{i \ne j} x_i^2 y_j^2 + \sum_{i \ne j} x_i y_i x_j y_j \right).
\]
%
Hence, we have
\[
\mathrm{Var}[\langle \bCx, \bCy \rangle] = \mathbb{E}[\langle \bCx, \bCy \rangle^2] - \langle \bx, \by \rangle^2 = \frac{1}{D} \left( \sum_{i \ne j} x_i^2 y_j^2 + \sum_{i \ne j} x_i y_i x_j y_j \right).
\]

Finally, using Cauchy-Schwarz and basic norm identities:
\[
\sum_{i \ne j} x_i^2 y_j^2 \le \|\bx\|_2^2 \|\by\|_2^2, \quad 
\sum_{i \ne j} x_i y_i x_j y_j \le \|\bx\|_2^2 \|\by\|_2^2 \, ,
\]
we conclude:
\[
\mathrm{Var}[\langle \bCx, \bCy \rangle] \le \frac{2}{D} \|\bx\|_2^2 \|\by\|_2^2 \enspace. 
\]
\end{proof}





\subsection{AMS Sketch on the Tensor Domain}~\label{sec:ams}

Similar to the Maclaurin-based approach of \citet{Kar12}, AMS Sketches can be employed as random features to approximate polynomial kernels. Below, we focus on the construction of a single random feature; computing 
$D$ features is straightforward by independently repeating this process 
$D$ times.
\citet{Indyk08}, followed by \citet{Braverman10}, analyzed the use of products of AMS Sketches with independently chosen hash functions, and established the following result:

\begin{lemma}~\cite[Lemma 4.1]{Braverman10}\label{lem:AMS1}
	Consider $\bx \in \mathbb{R}^{d}$, an integer $p > 1$, and $p$ independently sampled functions $s_1, \ldots, s_p: [d] \mapsto \{-1, 1\}$ from a 4-wise independent family.
	Define $Z = \prod_{j=1}^p Z_{s_j}(\bx)$. Then $\E{Z^2} = \| \bx^{(p)} \|_2^{2} = \| \bx \|_2^{2p}$ and $\Var{Z^2} \leq  (3^p - 1)\|\bx^{(p)}\|_2^{4} = (3^p - 1)\|\bx\|_2^{4p}$. 
\end{lemma}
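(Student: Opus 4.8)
The plan is to reduce everything to single-sketch moments, exploiting that $Z$ is a product of $p$ \emph{independent} AMS sketches of the same vector $\bx$, so that expectations of products factor. Since $s_1,\dots,s_p$ are sampled independently, the variables $Z_{s_1}(\bx),\dots,Z_{s_p}(\bx)$ are independent, hence $\E{Z^2} = \E{\prod_{j=1}^p Z_{s_j}(\bx)^2} = \prod_{j=1}^p \E{Z_{s_j}(\bx)^2}$. By Lemma~\ref{lem:AMS} each factor equals $\|\bx\|_2^2$, so $\E{Z^2} = \|\bx\|_2^{2p}$, which equals $\|\bx^{(p)}\|_2^2$ by Lemma~\ref{lem:TP1} applied with $\by=\bx$. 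For the same reason $\E{Z^4} = \prod_{j=1}^p \E{Z_{s_j}(\bx)^4}$, so the variance reduces to controlling the fourth moment of a single AMS sketch.

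The only substantial step is to evaluate $\E{Z_s(\bx)^4}$ for one sign hash $s$ drawn from a 4-wise independent family. Expanding,
\[
Z_s(\bx)^4 = \sum_{i_1,i_2,i_3,i_4 \in [d]} x_{i_1}x_{i_2}x_{i_3}x_{i_4}\, s(i_1)s(i_2)s(i_3)s(i_4).
\]
Each monomial involves at most four distinct indices, so 4-wise independence pins down $\E{s(i_1)s(i_2)s(i_3)s(i_4)}$ exactly: it is $1$ when every index value occurs an even number of times among $i_1,\dots,i_4$, and $0$ otherwise. The surviving patterns are ``all four indices equal'' (contributing $\sum_i x_i^4$) and ``two distinct values, each occurring twice'', the latter arising in $\binom{4}{2}=6$ orderings for each unordered pair $\{i,j\}$ with $i\neq j$. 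Hence
\[
\E{Z_s(\bx)^4} = \sum_i x_i^4 + 3\sum_{i\neq j} x_i^2 x_j^2 = 3\|\bx\|_2^4 - 2\sum_i x_i^4 \le 3\|\bx\|_2^4,
\]
using $\|\bx\|_2^4 = \big(\sum_i x_i^2\big)^2 = \sum_i x_i^4 + \sum_{i\neq j} x_i^2 x_j^2$ and $\sum_i x_i^4 \ge 0$.

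Putting the pieces together, $\E{Z^4} = \prod_{j=1}^p \E{Z_{s_j}(\bx)^4} \le \big(3\|\bx\|_2^4\big)^p = 3^p \|\bx\|_2^{4p}$, so
\[
\Var{Z^2} = \E{Z^4} - (\E{Z^2})^2 \le 3^p\|\bx\|_2^{4p} - \|\bx\|_2^{4p} = (3^p-1)\|\bx\|_2^{4p} = (3^p-1)\|\bx^{(p)}\|_2^{4}.
\]
The main obstacle is the combinatorial bookkeeping in the fourth-moment step: correctly identifying which of the $d^4$ sign products have nonzero expectation under 4-wise independence and getting the multinomial coefficient $6$ right. Everything else is just factoring expectations over the independent hash functions together with the elementary inequality $\sum_i x_i^4 \le \|\bx\|_2^4$. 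One subtlety worth noting is that the factorizations of $\E{Z^2}$ and $\E{Z^4}$ rely only on independence \emph{across} the $s_j$, whereas the single-sketch second and fourth moments require $2$-wise and $4$-wise independence of each individual $s_j$ respectively.
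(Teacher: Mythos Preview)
Your proof is correct and follows essentially the same approach as the paper: the paper does not give a standalone proof of Lemma~\ref{lem:AMS1} (it is cited from \citet{Braverman10}), but its proof of the two-vector generalization Lemma~\ref{lem:AMS2} proceeds exactly as you do---factor over the independent $s_j$, then expand the single-sketch fourth moment using 4-wise independence to obtain $3\|\bx\|_2^4 - 2\sum_i x_i^4 \le 3\|\bx\|_2^4$, and multiply to get the $3^p$ bound. Specializing that argument to $\by = \bx$ recovers your proof verbatim.
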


Since $Z$ is defined as the product of independent AMS sketches, its expectation $\mathbb{E}[Z^2]$ can be directly derived from Lemma~\ref{lem:AMS}. However, analyzing the variance $\mathrm{Var}(Z^2)$ is significantly more challenging and constitutes the main contribution of \citet{Braverman10}.

To proceed, consider defining a composite hash function $S(x_{i_1}, \ldots, x_{i_p}) = \prod_{j=1}^p s_j(x_{i_j})$, which maps a coordinate $x_{i_1}x_{i_2} \ldots x_{i_p}$ of $\bx^{(p)} \in [d]^p$ to $\{-1, 1\}$. Under this definition, $Z$ can be interpreted as an AMS sketch applied to the $p$-th order tensor product $\bx^{(p)}$:
$$Z = Z_{S}(\bx^{(p)}) = \prod_{j=1}^p Z_{s_j}(\bx) \, .$$
However, the hash function $S : [d]^p \rightarrow \{-1, 1\}$ is not drawn from a 4-wise independent family, and thus standard AMS analysis techniques are insufficient to bound the variance.
\citet{Braverman10} address this by analyzing the combinatorial structure of correlations between $S(\bu)$ and $S(\bv)$ for distinct index vectors $\bu, \bv \in [d]^p$. 

Based on this approach, it is natural to generalize Lemma~\ref{lem:AMS1} to handle a pair of tensorized vectors $\bx^{(p)}$ and $\by^{(p)}$, as follows:
\begin{lemma}\label{lem:AMS2}
	Consider $\bx, \by \in \mathbb{R}^{d}$, an integer $p > 1$, and $p$ independently sampled functions $s_1, \ldots, s_p: [d] \mapsto \{-1, 1\}$ from a 4-wise independent family.
	Define $Z = \prod_{j=1}^p Z_{s_j}(\bx)\,Z_{s_j}(\by)$. 
	Then $\E{Z} = \dotxy^{p}$ and $\Var{Z} \leq (3^p - 1) \, \|\bx\|_2^{2p} \, \|\by\|_2^{2p}$.
	%
\end{lemma}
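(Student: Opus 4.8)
The plan is to reduce Lemma~\ref{lem:AMS2} to the single-vector case (Lemma~\ref{lem:AMS1}) by a polarization-type argument. Observe that the quantity of interest, $Z = \prod_{j=1}^p Z_{s_j}(\bx)\, Z_{s_j}(\by)$, can be rewritten in terms of the composite hash $S$ on the tensor domain as $Z = Z_S(\bx^{(p)})\, Z_S(\by^{(p)})$, i.e.\ it is the product of \emph{the same} sketch (same realization of $s_1,\dots,s_p$) applied to $\bx^{(p)}$ and to $\by^{(p)}$. So the statement is really about the inner-product estimator $Z_S(\bu)\, Z_S(\bv)$ for $\bu = \bx^{(p)}, \bv = \by^{(p)} \in \mathbb{R}^{d^p}$, but with the catch (as the text already flags) that $S$ is not $4$-wise independent on $[d]^p$, so one cannot simply invoke the generic Count/AMS inner-product bound.

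First I would record the expectation: $\E{Z} = \E{Z_S(\bx^{(p)})\, Z_S(\by^{(p)})} = \langle \bx^{(p)}, \by^{(p)}\rangle = \dotxy^{p}$, where the middle equality uses that $S(\bu)S(\bv) = \prod_j s_j(u_j)s_j(v_j)$ has expectation $\prod_j \mathbb{I}[u_j=v_j] = \mathbb{I}[\bu=\bv]$ by independence of the $s_j$ and $2$-wise (indeed $1$-wise suffices here, via $\E{s_j(a)s_j(b)}=\mathbb{I}[a=b]$) independence within each family; the last equality is Lemma~\ref{lem:TP1}. Second, for the variance I would use the polarization identity $ab = \tfrac14\big((a+b)^2 - (a-b)^2\big)$ applied with $a = Z_S(\bx^{(p)})$, $b = Z_S(\by^{(p)})$, together with the observation that $Z_S$ is linear in its argument, so $a \pm b = Z_S(\bx^{(p)} \pm \by^{(p)})$. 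Hence $Z = \tfrac14\big(Z_S(\bx^{(p)}+\by^{(p)})^2 - Z_S(\bx^{(p)}-\by^{(p)})^2\big)$, and by $\Var{U-V} \le 2\Var{U} + 2\Var{V}$ (or $(u-v)^2 \le 2u^2+2v^2$ inside the variance), $\Var{Z} \le \tfrac{1}{8}\big(\Var{Z_S(\bx^{(p)}+\by^{(p)})^2} + \Var{Z_S(\bx^{(p)}-\by^{(p)})^2}\big)$.

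Now the point is that $Z_S(\bw)$ for a \emph{single} vector $\bw \in \mathbb{R}^{d^p}$ of the tensor form — here $\bw = \bx^{(p)} \pm \by^{(p)}$ — is not literally of the form $\bz^{(p)}$, so I cannot apply Lemma~\ref{lem:AMS1} verbatim. The cleanest route is instead to go back to the combinatorial argument of \citet{Braverman10}: their bound $\Var{Z_S(\bw)^2} \le (3^p-1)\|\bw\|_2^4$ only uses the correlation structure of $S(\bu)$ vs.\ $S(\bv)$ for $\bu,\bv \in [d]^p$ and holds for \emph{any} vector $\bw \in \mathbb{R}^{[d]^p}$, not just pure tensor powers — the tensor structure of $\bx^{(p)}$ is used nowhere in the variance proof, only in identifying the norm as $\|\bx\|_2^{2p}$ at the end. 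Granting that, $\Var{Z_S(\bx^{(p)}\pm\by^{(p)})^2} \le (3^p-1)\,\|\bx^{(p)}\pm\by^{(p)}\|_2^4$. Combining with the triangle inequality $\|\bx^{(p)}\pm\by^{(p)}\|_2 \le \|\bx^{(p)}\|_2 + \|\by^{(p)}\|_2 = \|\bx\|_2^p + \|\by\|_2^p$, I get $\Var{Z} \le \tfrac{1}{8}\cdot 2\,(3^p-1)\,(\|\bx\|_2^p+\|\by\|_2^p)^4 = \tfrac14(3^p-1)(\|\bx\|_2^p+\|\by\|_2^p)^4$.

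The main obstacle is the gap between this last expression and the claimed $(3^p-1)\,\|\bx\|_2^{2p}\|\by\|_2^{2p}$: a naive polarization loses a factor like $(\|\bx\|^p+\|\by\|^p)^4 / (\|\bx\|^p\|\by\|^p)^2$, which is $\ge 16$ and is \emph{not} bounded in general (it blows up when one norm dominates the other). So pure polarization at the level of $Z_S(\bx^{(p)})^2$ is too lossy. The fix I would pursue is to not square prematurely: expand $\Var{Z} = \E{Z_S(\bx^{(p)})^2 Z_S(\by^{(p)})^2} - \langle\bx^{(p)},\by^{(p)}\rangle^2$ directly and run the Braverman-style accounting on the \emph{bilinear} estimator. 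Concretely, write $Z = \sum_{\bu,\bv} \bx^{(p)}_{\bu} \by^{(p)}_{\bv} S(\bu)S(\bv)$ and $Z^2 = \sum_{\bu,\bv,\bu',\bv'} \bx^{(p)}_{\bu}\bx^{(p)}_{\bu'}\by^{(p)}_{\bv}\by^{(p)}_{\bv'}\, \E{S(\bu)S(\bv)S(\bu')S(\bv')}$; the expectation factors coordinatewise as $\prod_{j=1}^p \E{s_j(u_j)s_j(v_j)s_j(u'_j)s_j(v'_j)}$, each factor being $0$ or $1$ depending on how the four indices pair up (this is exactly where $4$-wise independence of each $s_j$ is used). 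Grouping the surviving terms by the "collision pattern" across the $p$ coordinates and using $\langle\bx,\by\rangle^2 \le \|\bx\|_2^2\|\by\|_2^2$ and $\sum_{a\ne b}x_a^2 y_b^2 \le \|\bx\|_2^2\|\by\|_2^2$ (as in the proof of Lemma~\ref{lem:CS}) at each coordinate, a product over $p$ coordinates of a "$3$-way choice minus the diagonal" should telescope to the factor $3^p - 1$ while the norms combine multiplicatively into $\|\bx\|_2^{2p}\|\by\|_2^{2p}$. I expect the bookkeeping of these collision patterns — showing the per-coordinate contribution is at most a factor $3$ worse than the $\langle\bx,\by\rangle^2$ "main term" and that the all-main-term contribution cancels exactly against $\langle\bx^{(p)},\by^{(p)}\rangle^2$ — to be the technical heart of the argument, mirroring \cite[Lemma 4.1]{Braverman10} but tracking two vectors instead of one.
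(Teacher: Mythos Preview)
Your polarization detour is, as you yourself conclude, too lossy and should be discarded. Your final paragraph is on the right track and would work, but it is a roundabout version of the paper's much shorter argument. The simplification you are missing is that independence of the $s_j$ already factorizes the second moment \emph{before} you ever pass to tensor indices: writing $W_j = Z_{s_j}(\bx)\,Z_{s_j}(\by)$, the $W_j$ are independent across $j$, so
\[
\E{Z^2} \;=\; \prod_{j=1}^p \E{W_j^2}\,,
\]
and the whole problem reduces to bounding a \emph{single} factor $\E{W_j^2}$. The paper expands $\E{W_j^2} = \E{Z_{s_j}(\bx)^2 Z_{s_j}(\by)^2}$ as a four-fold sum over $i,k,i',k'\in[d]$, uses $4$-wise independence of $s_j$ to keep only the paired terms, and obtains the closed form $2\langle\bx,\by\rangle^2 + \|\bx\|_2^2\|\by\|_2^2 - 2\sum_i x_i^2 y_i^2 \le 3\,\|\bx\|_2^2\|\by\|_2^2$. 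Multiplying over $j$ gives $\E{Z^2}\le 3^p\,\|\bx\|_2^{2p}\|\by\|_2^{2p}$ and hence the variance bound. Your ``collision pattern across the $p$ coordinates'' is exactly this per-$j$ factorization seen through the tensor-index lens; going through $\bu,\bv,\bu',\bv'\in[d]^p$ and then refactoring coordinate-wise just reconstructs $\prod_j \E{W_j^2}$ the hard way. There is no need to invoke the Braverman combinatorics for general $\bw\in\mathbb{R}^{[d]^p}$ at all.

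One minor note: the paper's own proof in fact only concludes $\Var{Z}\le 3^p\|\bx\|_2^{2p}\|\by\|_2^{2p}$ (it does not recover the ``$-1$'' from the statement), so you need not chase the exact constant $3^p-1$ via a telescoping identity.
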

%


\begin{proof}
	Following the approach of~\cite{Braverman10}, we compute the expectation and variance of \(Z\).
	First, we consider the expectation, and for each \(j\), we note that
	\[
	\mathbb{E}\bigl[Z_{s_j}(\bx) \, Z_{s_j}(\by)\bigr] = \mathbb{E}\left[\left(\sum_{i=1}^d x_i\, s_j(i)\right) \left(\sum_{k=1}^d y_k\, s_j(k)\right)\right] = \langle \bx, \by \rangle,
	\]
	where the last equality follows from the 4-wise independence and unbiasedness of the \(s_j\). Since the functions \(s_j\) are independent for different \(j\), we have
	\[
	\mathbb{E}[Z] = \prod_{j=1}^p \mathbb{E}\bigl[Z_{s_j}(\bx)\, Z_{s_j}(\by)\bigr] = \langle \bx, \by \rangle^p \, .
	\]

	Next, we bound the variance \(\operatorname{Var}(Z) = \mathbb{E}[Z^2] - (\mathbb{E}[Z])^2\). Because the hash functions are independent across different \(j\), we may write
	\begin{equation}\label{eq:secondmoment}
		\mathbb{E}[Z^2] = \prod_{j=1}^p \mathbb{E}\Bigl[ \bigl(Z_{s_j}(\bx)\, Z_{s_j}(\by)\bigr)^2 \Bigr] \, .
	\end{equation}
	For each \(j\), expanding the square gives
	\begin{align*}
		\mathbb{E}\Bigl[\bigl(Z_{s_j}(\bx)\, Z_{s_j}(\by)\bigr)^2\Bigr] 
		& = \mathbb{E}\left[ \left( \sum_{i=1}^d x_i\, s_j(i) \right)^2 \left( \sum_{k=1}^d y_k\, s_j(k) \right)^2 \right] \\
		& = \sum_{i=1}^d \sum_{k=1}^d \sum_{i'=1}^d \sum_{k'=1}^d x_i\, y_k\, x_{i'}\, y_{k'}\, \mathbb{E}\left[ s_j(i)\, s_j(k)\, s_j(i')\, s_j(k') \right] \, .
	\end{align*}
	Observing that \(\mathbb{E}\left[s_j(i)s_j(k)s_j(i')s_j(k')\right]\) is nonzero only when the indices form pairs (including the possibility that all four are identical), we have
	\[
	\mathbb{E}\left[ s_j(i)s_j(k)s_j(i')s_j(k') \right] = 
	\begin{cases}
		1 & \text{if } i = k = i' = k', \\
		1 & \text{if } i = k \neq i' = k', \\
		1 & \text{if } i = i' \neq k = k', \\
		1 & \text{if } i = k' \neq k = i', \\
		0 & \text{otherwise}.
	\end{cases}
	\]
	
	The contribution from terms with \(i = k = i' = k'\) is $\sum_{i=1}^d x_i^2 y_i^2$.
	Terms with \(i = k \neq i' = k'\) contribute
	\[
	\sum_{i\neq i'} x_i\, y_i\, x_{i'}\, y_{i'} = \left(\sum_{i=1}^d x_i\, y_i\right)^2 - \sum_{i=1}^d x_i^2 y_i^2 = \langle \bx, \by \rangle^2 - \sum_{i=1}^d x_i^2 y_i^2 \, .
	\]
	The case \(i = k' \neq k = i'\) is symmetric and yields the same contribution. Finally, for \(i = i' \neq k = k'\) we obtain
	\[
	\sum_{i\neq k} x_i^2\, y_k^2 = \|\bx\|_2^2\, \|\by\|_2^2 - \sum_{i=1}^d x_i^2 y_i^2 \, .
	\]
	Thus, summing these contributions, we have
	\begin{align*}
    \mathbb{E}\Bigl[\bigl(Z_{s_j}(\bx)\, Z_{s_j}(\by)\bigr)^2\Bigr] &= \sum_{i=1}^d x_i^2 y_i^2 + 2\Bigl(\langle \bx, \by \rangle^2 - \sum_{i=1}^d x_i^2 y_i^2\Bigr) + \Bigl(\|\bx\|_2^2\, \|\by\|_2^2 - \sum_{i=1}^d x_i^2 y_i^2\Bigr) \\
    &= 2\langle \bx, \by \rangle^2 + \|\bx\|_2^2\, \|\by\|_2^2 - 2\sum_{i=1}^d x_i^2 y_i^2 \, .
	\end{align*}
	Using the Cauchy--Schwarz inequality, \(\langle \bx, \by \rangle^2 \leq \|\bx\|_2^2\, \|\by\|_2^2\), and noting that \(\sum_{i=1}^d x_i^2 y_i^2 \geq 0\), it follows that
	\[
	\mathbb{E}\Bigl[\bigl(Z_{s_j}(\bx)\, Z_{s_j}(\by)\bigr)^2\Bigr] \leq 3\, \|\bx\|_2^2\, \|\by\|_2^2 \, .
	\]
	Substituting this bound into \eqref{eq:secondmoment} yields
	\[
	\mathbb{E}[Z^2] \leq \Bigl( 3\, \|\bx\|_2^2\, \|\by\|_2^2 \Bigr)^p,
	\]
	which completes the proof since
	\[
	\operatorname{Var}(Z) = \mathbb{E}[Z^2] - \langle \bx, \by \rangle^{2p} \leq 3^p\, \|\bx\|_2^{2p}\, \|\by\|_2^{2p} \, .
	\]
\end{proof}

Lemma~\ref{lem:AMS2} shows that the AMS sketch can be interpreted as a variant of the Maclaurin-based approach of \citet{Kar12}, where the Rademacher vectors $\mathbf{w}_i$ are replaced by hash functions $s$ drawn from a 4-wise independent family. This substitution enables a provable variance bound for the resulting random feature maps.

\section{Tensor Sketch}\label{sec:tensorsketch}

We now describe how to compute the Count Sketch of a tensor product $\bx^{(p)}$, which serves as a random feature map for approximating polynomial kernels $\kappa(\bx, \by) = \dotxy^p$, for an integer $p > 0$. 
For the kernel  $\kappa(\bx, \by) = \left( c + \dotxy \right)^p$,  we can avoid the constant $c > 0$ by adding an extra dimension of value $\sqrt{c}$ to all data vectors. 
For a fixed polynomial degree $p > 1$, our method maps an input vector $\bx \in \mathbb{R}^d$ to a low-dimensional feature vector in $\mathbb{R}^D$ in time $\mathcal{O}(d + D \log D)$.
This runtime represents a significant improvement over the AMS Sketch-based approach described in Section~\ref{sec:ams}, both in terms of computational efficiency and practical scalability.

\subsection{Convolution of Count Sketches}

\citet{Pagh13} introduced a fast algorithm for computing the Count Sketch of the outer product of two vectors. As our method builds on this technique, we briefly review it here.
Rather than explicitly forming the outer product, the key idea is to first compute the Count Sketches of the input vectors and then derive the sketch of their outer product directly from these compressed representations. As we will show, this process reduces to polynomial multiplication, which can be efficiently implemented using the Fast Fourier Transform (FFT).~\footnote{
For background on FFT, see for example \citet[Section~5.6]{Kleinberg05}.}
This yields an algorithm to compute the Count Sketch of an outer product in time near-linear in the size of the sketches.

More precisely, given a vector $\bx \in \mathbb{R}^d$, we denote by $\bCxone, \bCxtwo \in \mathbb{R}^D$ two different Count Sketches using hash functions $h_1, h_2 : [d] \mapsto [D]$ and $s_1, s_2: [d] \mapsto \left\{ -1, 1 \right\}$, all independently sampled from \textit{2-wise} independent families.
The aim is to compute a Count Sketch of the outer product $\bx \otimes \bx \in \mathbb{R}^{d^2}$, denoted $\bCx^{(2)} \in \mathbb{R}^D$, from $\bCxone$ and $\bCxtwo$. 
We define the Count Sketch $\bCx^{(2)}$ in terms of the hash functions $H : [d^2] \mapsto [D]$ and $S: [d^2] \mapsto \left\{ -1, 1 \right\}$ derived from the functions $h_1, h_2, s_1, s_2$, where:
%
\begin{align}\label{eq:Hash}
	H(i_1, i_2) = \left( h_1(i_1) + h_2(i_2) \right) \text{ mod } D \text{ and } S(i_1, i_2) = s_1(i_1)s_2(i_2) \enspace .
\end{align}

It is well known that since $h_1$, $h_2$, $s_1$, $s_2$ are sampled from 2-wise independent families, $H$ and $S$ are also sampled from 2-wise independent families~\citep{Patrascu11}. 
Na\"ively computing $\bCx^{(2)}$ would require $\BO{d^2}$ time.
However, by thinking of Count Sketches as polynomials, we are able to exploit FFT to fast compute $\bCx^{(2)}$ given hash functions $H$ and $S$ defined in Equation~\ref{eq:Hash}.

In particular, we represent Count Sketches $\bCxone, \bCxtwo \in \mathbb{R}^D$ as polynomials of degree $D-1$
where each entry $(\bCxone)_k$ or $(\bCxtwo)_k$ is the coefficient of $\omega^k$ in the polynomial:
%
$$P_\bx(\omega) = \sum_{i=1}^{d}{s_1(i)x_i \, \omega^{h_1(i)}} \text{ and } Q_\bx(\omega) = \sum_{i=1}^{d}{s_2(i)x_i \, \omega^{h_2(i)}} \, .$$
In other words, the $k$-th features of $\bCxone$ and $\bCxtwo$ corresponding to the coefficients of the term $\omega^k$ are $\sum_{i, h_1(i) = k} s_1(i)x_i$ and $\sum_{i, h_2(i) = k} s_2(i)x_i$, respectively.
We will derive the polynomial multiplication $P_\bx(\omega) \times Q_\bx(\omega)$ as follows.

\begin{align*}
	&P_\bx(\omega) \times Q_\bx(\omega) = \left( \sum_{i=1}^{d}{s_1(i)x_i \, \omega^{h_1(i)}} \right) \left( \sum_{i=1}^{d}{s_2(i)x_i \, \omega^{h_2(i)}} \right) \\
	&= \sum_{i_1,i_2=1}^{d}{s_1(i_1)s_2(i_2)x_{i_1}x_{i_2} \, \omega^{h_1(i_1) + h_2(i_2)}} 
	= \sum_{i_1,i_2=1}^{d}{S(i_1, i_2)x_{i_1}x_{i_2} \, \omega^{h_1(i_1) + h_2(i_2)}} \, . 
\end{align*} 

We note that the polynomial $P_\bx(\omega) \times Q_\bx(\omega)$ has degree $2D - 2$ since $h_1, h_2: [d] \mapsto [D]$.
We transform the polynomial $P_\bx(\omega) \times Q_\bx(\omega)$ to the polynomial of degree $D - 1$ by casting coefficients of the term $\omega^k$ as the coefficients of the term $\omega^{k \text{ mod } D}$ where $0 \leq k \leq 2D - 2$.
We denote by $P_{\bx^{(2)}}(\omega)$ the transformation polynomial of $(D - 1)$-degree from $P_\bx(\omega) \times Q_\bx(\omega)$.
It is clear that $P_{\bx^{(2)}}(\omega)$ is polynomial representation of the Count Sketch of the $\bCx^{(2)} \in \mRD$ of $\bx^{(2)}$ using $H$ and $S$ as
\begin{align*}
	P_{\bx^{(2)}}(\omega) &= \sum_{i_1,i_2=1}^{d}{S(i_1, i_2)x_{i_1}x_{i_2} \, \omega^{\left( h_1(i_1) + h_2(i_2) \right) \text{ mod } D}} = \sum_{i_1,i_2=1}^{d}{S(i_1, i_2)x_{i_1}x_{i_2} \, \omega^{H(i_1, i_2)}} \, .
\end{align*} 

The $(D-1)$-degree polynomial $P_{\bx^{(2)}}(\omega)$ derived from the polynomial multiplication $P_\bx(\omega) \times Q_\bx(\omega)$ can be computed in time $\BO{D\log{D}}$ using FFT and its inverse FFT$^{-1}$: 
%
%
%
$$P_{\bx^{(2)}}(\omega) = \text{FFT}^{-1} \left( \text{FFT}(P_\bx(\omega)) \circ \text{FFT}(Q_\bx(\omega)) \right) \, ,$$
where $\circ$ is the component-wise product operator defined by $({\bf a} \circ {\bf b})_i = a_i b_i$. 
In other words, the Count Sketch $\bCx^{(2)}$ of $\bx \otimes \bx$ can be efficiently computed by Count Sketches $\bCxone$ and $\bCxtwo$ in $\BO{d + D\log{D}}$ time.
The first term comes from constructing the count sketches and the latter comes from running FFT three times.

\subsection{Tensor Sketch}

We now extend the previous method to compute Tensor Sketch of a $p$-th tensor power $\bxp$.
This is achieved by convolving $p$ independent Count Sketches, each constructed using hash functions $h_1, \ldots, h_p : [d] \to [D]$ and $s_1, \ldots, s_p : [d] \to \{-1, 1\}$.
The hash functions $h_i$ are drawn from 2-wise independent families, but the sign functions $s_i$ are drawn from \textit{4-wise} independent families to ensure variance bounds in the resulting sketch.
This construction yields a Count Sketch for $\bx^{(p)}$, referred to as the $p$th-order \textit{Tensor Sketch}, defined by the following composite hash functions:
\begin{displaymath}
\begin{aligned}
H(i_1, \ldots , i_p) = \left( \sum_{j=1}^{p}{h_j(i_j)} \right) \text{ mod } D \text{ and } 
S(i_1, \ldots , i_p) = \prod_{j=1}^{p}{s_j(i_j)} \, .
\end{aligned}
\end{displaymath}

We leverage the efficient computation of Count Sketches over tensor domains to develop a fast algorithm for approximating the homogeneous polynomial kernel $\kappa(\bx, \by) = \langle \bx, \by \rangle^p$, where $p$ is a positive integer.
For each input vector $\bx \in \mathbb{R}^d$, Tensor Sketch computes a Count Sketch of the $p$-fold tensor product $\bx^{(p)}$, producing a $D$-dimensional random feature map in $\mathbb{R}^D$ that approximates the kernel.
The full procedure is outlined in Algorithm~\ref{algorithm:tensorsketch}, which illustrates how Tensor Sketch efficiently maps input vectors to the lower-dimensional kernel feature space.
\begin{algorithm}[t]
\caption{Tensor Sketch($\bx, p, D$)}
\label{alg1} 									
\begin{algorithmic} [1]
\REQUIRE { Vector $\bx \in \mRd$, integer $D$, $p > 1$, $h_1, \ldots, h_p: [d] \mapsto [D]$ independently sampled from a 2-wise independent family, $s_1, \ldots, s_p: [d] \mapsto \{-1, 1\}$ independently sampled from a 4-wise independent family. }
\ENSURE { Return a feature vector $f(\bx)\in\mathbb{R}^D$ such that $\E{\left\langle f(\bx),f(\by) \right\rangle} = \kappa(\bx, \by) = \left\langle \bx, \by \right\rangle^p$  }
%
	\STATE {For $i = 1, \ldots , p$ create Count Sketch $\bf{C_i x}$ using hash functions $h_i, s_i$ }
	\STATE {For $i = 1, \ldots , p$ let $\bf{\widehat{C}_i x} \gets \text{FFT}(\bf{C_i x}) $ }
	\STATE {Let $\bf{\widehat{C}x} \gets \bf{\widehat C_1x} \circ \ldots \circ \bf{\widehat C_px}$} \hfill (component-wise multiplication)
    \STATE {Let $f({\bf x}) \gets \text{FFT}^{-1}(\bf{\widehat{C}x})$}
%
%
\RETURN {$f({\bx})$}
\normalsize
\end{algorithmic}\label{algorithm:tensorsketch}
\end{algorithm}

We maintain $2p$ independent hash functions: $h_1, \ldots, h_p$, which are 2-wise independent, and $s_1, \ldots, s_p$, which are 4-wise independent.
For each input vector $\bx$, we construct $p$ independent Count Sketches of size $D$ using these hash functions (Line 1 of Algorithm~\ref{algorithm:tensorsketch}).
The final sketch of the $p$-fold tensor product $\bx^{(p)}$ is then computed via fast polynomial multiplication using the Fast Fourier Transform (FFT).
This procedure yields a random feature map $f$ that serves as an unbiased estimator of the homogeneous polynomial kernel for any pair of input vectors.

We now analyze the computational and space complexity of Tensor Sketch.
Since the degree $p$ is fixed and typically small, only $\mathcal{O}(1)$ space is required to store the $2p$ hash functions~\citep{Carter79, Thorup12}.
For each vector, computing the sketch of its $p$-fold tensor product takes $\mathcal{O}(d + D \log D)$ time, due to the use of FFT.
Hence, given $n$ data points, the total runtime of Tensor Sketch is $\mathcal{O}(n(d + D \log D))$.

To improve the accuracy of kernel approximation, $D$ is often chosen to be $\mathcal{O}(d)$, resulting in an overall time complexity of $\mathcal{O}(nd \log d)$.
This is significantly faster than earlier approaches such as \citet{Kar12}, \citet{Hamid14}, and \citet{Pennington15}, which typically require $\mathcal{O}(nd^2)$ time.
Furthermore, Tensor Sketch only requires $\mathcal{O}(1)$ additional memory to store the hash functions, while previous methods demand $\mathcal{O}(d)$ space~\citep{Kar12, Hamid14, Meister19} or even $\mathcal{O}(d^2)$~\citep{Pennington15, Krein}.

\subsection{Error Analysis}\label{sec:error}

In this section, we analyze the estimation accuracy of the polynomial kernel $\kappa(\bx, \by) = \langle \bx, \by \rangle^p$ for a positive integer $p$, focusing on the concentration behavior of the estimator produced by Tensor Sketch.
We derive bounds on the variance of the estimator as a function of the number of random features $D$, and show that the estimate concentrates tightly around its expected value.
The following theorem establishes both the unbiasedness and a variance bound for the approximation provided by Tensor Sketch.

\begin{theorem}\label{thm:main}
Given two vectors $\bx, \by \in \mathbb{R}^d$, we denote by $\bCx^{(p)}, \bCy^{(p)} \in \mathbb{R}^D$ the Count Sketches of $\bx^{(p)}, \by^{(p)} \in \mathbb{R}^{d^p}$ using hash functions $h_1, \ldots , h_p: [d] \mapsto [D]$ and $s_1, \ldots , s_p: [d] \mapsto \left\{ -1, 1 \right\}$ chosen independently from \textit{2-wise} and \textit{4-wise} independent families, respectively.
Then we have
\begin{displaymath}
\begin{aligned}
\E{\left\langle \bCx^{(p)}, \bCy^{(p)} \right\rangle } &= \left\langle \bxp, \byp \right\rangle = \dotxy^{p},\\
\Var{\left\langle \bCx^{(p)}, \bCy^{(p)} 
\right\rangle } &\leq 
\frac{3^p - 1}{D} \|\bx\|_2^{2p}\|\by\|_2^{2p}
\enspace. 
\end{aligned}
\end{displaymath}
\end{theorem}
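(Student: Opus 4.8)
\emph{Proof proposal.} I would start by unfolding $\left\langle \bCx^{(p)}, \bCy^{(p)}\right\rangle$ as the inner product of two ordinary Count Sketches of the exponentially long vectors $\bx^{(p)},\by^{(p)}\in\mathbb{R}^{d^p}$ under the composite hash pair $(H,S)$. Writing $\bu=(u_1,\dots,u_p)\in[d]^p$ and $a_\bu=\bx^{(p)}_\bu=\prod_{j=1}^p x_{u_j}$, $b_\bu=\by^{(p)}_\bu=\prod_{j=1}^p y_{u_j}$, one has
\[
\left\langle \bCx^{(p)}, \bCy^{(p)}\right\rangle \;=\; \sum_{\bu,\bv\in[d]^p} S(\bu)S(\bv)\, a_\bu b_\bv\, \mathbb{I}[H(\bu)=H(\bv)].
\]
The argument uses only two structural facts: $H$ and $S$ are independent, being assembled from the $2p$ independently chosen functions $h_1,\dots,h_p,s_1,\dots,s_p$; and expectations involving $S$ factor over the coordinate index $j$ since $S(\bu)=\prod_j s_j(u_j)$. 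In particular $\E{S(\bu)S(\bv)}=\prod_j \E{s_j(u_j)s_j(v_j)} = \mathbb{I}[\bu=\bv]$ by pairwise independence of each $s_j$, and for $\bu\neq\bv$, if $j_0$ is a coordinate where they differ, pairwise independence of $h_{j_0}$ makes $H(\bu)-H(\bv)\bmod D$ uniform, so $\Pr{H(\bu)=H(\bv)}=1/D$. Taking expectations, only the diagonal $\bu=\bv$ survives, giving $\E{\left\langle \bCx^{(p)}, \bCy^{(p)}\right\rangle}=\sum_\bu a_\bu b_\bu=\langle \bx^{(p)},\by^{(p)}\rangle=\langle\bx,\by\rangle^p$ by Lemma~\ref{lem:TP1}.

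For the variance I would expand the square into a sum over quadruples $(\bu,\bv,\bu',\bv')$ and, using $H\!\perp\!S$, write the expectation of each term as $a_\bu b_\bv a_{\bu'} b_{\bv'}$ times $\E{S(\bu)S(\bv)S(\bu')S(\bv')}$ times $\pi:=\E{\mathbb{I}[H(\bu){=}H(\bv)]\,\mathbb{I}[H(\bu'){=}H(\bv')]}\ge 0$. The $S$-factor equals $\prod_{j=1}^p \E{s_j(u_j)s_j(v_j)s_j(u'_j)s_j(v'_j)}$, and by \emph{$4$-wise} independence of $s_j$ each factor is $1$ when the four indices at coordinate $j$ form a matching — i.e.\ one of the three patterns $u_j{=}v_j,\,u'_j{=}v'_j$ (call it I), $u_j{=}u'_j,\,v_j{=}v'_j$ (II), $u_j{=}v'_j,\,v_j{=}u'_j$ (III) holds, these coinciding only when all four are equal — and $0$ otherwise. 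So only quadruples \emph{matched at every coordinate} contribute. Among those, the purely diagonal ones ($\bu=\bv$, $\bu'=\bv'$) have $\pi=1$ and contribute exactly $\sum_{\bu,\bu'} a_\bu b_\bu a_{\bu'} b_{\bu'} = \langle\bx,\by\rangle^{2p}$, cancelling $(\E{\cdot})^2$; every other matched quadruple has $\pi\le 1/D$ (if $\bu\ne\bv$ this is $\Pr{H(\bu){=}H(\bv)}=1/D$; if $\bu=\bv$, then non-diagonality forces $\bu'\ne\bv'$ and $\pi=\Pr{H(\bu'){=}H(\bv')}=1/D$). Hence $\Var{\left\langle \bCx^{(p)}, \bCy^{(p)}\right\rangle}\le \frac1D\sum |a_\bu b_\bv a_{\bu'} b_{\bv'}|$, summed over non-diagonal everywhere-matched quadruples.

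To finish, observe that a non-diagonal matched quadruple cannot have pattern I holding at \emph{every} coordinate (that would force $\bu=\bv$ and $\bu'=\bv'$), so it obeys the equality constraints of some type vector $\tau\in\{\mathrm{I},\mathrm{II},\mathrm{III}\}^p$ with $\tau\neq(\mathrm{I},\dots,\mathrm{I})$; there are $3^p-1$ such $\tau$. For a fixed $\tau$ the constraints are per-coordinate, so the sum of $|a_\bu b_\bv a_{\bu'}b_{\bv'}|$ over quadruples obeying $\tau$ factorizes over $j\in[p]$, and a one-line Cauchy--Schwarz estimate shows each coordinate factor is at most $\|\bx\|_2^2\|\by\|_2^2$ regardless of its type (type II gives exactly $\sum_{\alpha,\beta} x_\alpha^2 y_\beta^2 = \|\bx\|_2^2\|\by\|_2^2$; types I and III give $\bigl(\sum_\alpha |x_\alpha y_\alpha|\bigr)^2\le\|\bx\|_2^2\|\by\|_2^2$). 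Multiplying over the $p$ coordinates and summing over the $3^p-1$ type vectors yields $\Var{\left\langle \bCx^{(p)}, \bCy^{(p)}\right\rangle}\le \frac{3^p-1}{D}\|\bx\|_2^{2p}\|\by\|_2^{2p}$. As a sanity check this matches Lemma~\ref{lem:AMS2}: both bounds are driven by the same ``three matchings per coordinate'' combinatorics, with the extra $1/D$ here coming from the Count Sketch bucketing.

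The main obstacle is the bookkeeping rather than any single hard inequality: one must see that subtracting $(\E{\cdot})^2$ removes \emph{precisely} the quadruples cut out by the all-I type vector — so the naive count $3^p$ sharpens to $3^p-1$ — and that this cancellation has to be done \emph{before} passing to absolute values (this is the step where the conference-version constant was wrong). Two smaller points: replacing each term by $|a_\bu b_\bv a_{\bu'}b_{\bv'}|$ is legitimate only because $\pi\ge0$ and we only need an upper bound; and $4$-wise (not merely pairwise) independence of the $s_j$ is genuinely necessary, since it is exactly what annihilates quadruples whose four indices at some coordinate are all distinct.
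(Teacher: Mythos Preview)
Your proof is correct and follows the same overall architecture as the paper's: write $\langle\bCx^{(p)},\bCy^{(p)}\rangle$ as a Count Sketch inner product under the composite hashes $(H,S)$, use independence of $H$ and $S$, isolate the diagonal to get unbiasedness, bound the collision factor by $1/D$ on the off-diagonal, and then control the $S$-moments via the ``three matchings per coordinate'' structure coming from $4$-wise independence of the $s_j$.

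The one substantive difference is in how the $3^p-1$ constant is extracted. The paper first splits off the diagonal $\bu=\bv$, so the variance is $\E{(\sum_{\bu\neq\bv}\cdots)^2}$; after pulling out $1/D$ and passing to absolute values it recognizes the remaining expression as exactly $\Var{Z}$ for $Z=\prod_j Z_{s_j}(|\bx|)Z_{s_j}(|\by|)$ and invokes Lemma~\ref{lem:AMS2} as a black box. You instead expand over all quadruples, cancel the purely diagonal block against $(\E{\cdot})^2$, and then do the combinatorics directly by summing over the $3^p-1$ type vectors $\tau\neq(\mathrm I,\dots,\mathrm I)$, bounding each coordinate factor by $\|\bx\|_2^2\|\by\|_2^2$ via Cauchy--Schwarz. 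Your route is more self-contained and makes the source of the $3^p-1$ completely explicit (and your remark that the constant should match Lemma~\ref{lem:AMS2} is exactly right: you are essentially reproving that lemma inline). The paper's route is more modular and shorter once Lemma~\ref{lem:AMS2} is available. One minor redundancy: your case ``$\bu=\bv$ but $\bu'\neq\bv'$'' cannot actually occur for a matched quadruple (if $\bu=\bv$ and every coordinate is matched then $\bu'=\bv'$ is forced), so in fact every non-diagonal matched quadruple already has $\bu\neq\bv$; this does not affect correctness.
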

\begin{proof}
We note that the Tensor Sketches $\bCx^{(p)}, \bCy^{(p)}$ are the Count Sketches of the tensor product $\bX = \bx^{(p)}, \bY = \by^{(p)}$ using the two new hash functions $H : [d]^p \mapsto [D]$ and $S: [d]^p \mapsto \left\{-1, 1 \right\}$ such that:
$$H(i_1, \ldots, i_p) = \left( \sum_{j = 1}^{p}{h_j(i_j)}\right) \text{ mod } D \text{ and } S(i_1, \ldots, i_p) = \prod_{j = 1}^{p}{s_j(i_j)} \, .$$

For notational simplicity we define by $\bu, \bv \in [d]^p$ as the index of vectors $\bX, \bY$ of $d^p$ dimensions.
Also note that $H(\bu)$ is still 2-wise independent~\citep{Patrascu11} but $S(\bu)$ is \textit{not} 4-wise independent anymore~\citep{Indyk08, Braverman10}.
That leads to the incorrect result on the variance bound on Lemma 6 of the conference version~\citep{TS} as the dependence on $p$ is missing.

Define the indicator variable $\xi_{\bu \bv} = \mathbb{I}[H(\bu) = H(\bv)]$ for any $\bu, \bv \in [d]^p$, we have

\begin{align*}
\left\langle \bCx^{(p)}, \bCy^{(p)} \right\rangle &= \sum_{\bu, \bv \in [d]^p}{\bX_{\bu} \bY_{\bv} S(\bu)S(\bv) \xi_{\bu \bv}} 
&= \left\langle \bX, \bY \right\rangle + \sum_{\bu \neq \bv \in [d]^p}{\bX_{\bu} \bY_{\bv} S(\bu)S(\bv) \xi_{\bu \bv}} \, .
\end{align*}
Recall that $S: [d]^p \rightarrow \left\{-1, 1\right\}$ is 2-wise independent, applying the independence property of this hash function, we can verify that 
$\mathbb{E}  \left[\left\langle \bCx^{(p)}, \bCy^{(p)}  \right\rangle \right] = \left\langle \bX, \bY \right\rangle = \dotxy^{p}$.

For the variance, we compute $\E{\left[\left\langle \bCx^{(p)}, \bCy^{(p)}  \right\rangle^2 \right]}$ by expanding $\left\langle \bCx^{(p)}, \bCy^{(p)} \right\rangle^2$:
\begin{align*}
&\left\langle \bCx^{(p)}, \bCy^{(p)} \right\rangle^2 = \left(  \left\langle \bX, \bY \right\rangle + \sum_{\bu \neq \bv}{\bX_{\bu} \bY_{\bv} S(\bu)S(\bv) \xi_{\bu \bv}} \right)^2 \\
&= \left\langle \bX, \bY \right\rangle^2 + 2\left\langle \bX, \bY \right\rangle \sum_{\bu \neq \bv}{\bX_{\bu} \bY_{\bv} S(\bu)S(\bv) \xi_{\bu \bv}} + \left( \sum_{\bu \neq \bv}{\bX_{\bu} \bY_{\bv} S(\bu)S(\bv) \xi_{\bu \bv}} \right)^2 \, ,
\end{align*}
where the expectation for the second term is~0.
Applying the independence between $S$ and $H$ together with Lemma~\ref{lem:AMS2}, we can bound the expectation of the last term and prove the variance claim as follows.
\begin{align*}
&\E{ \left( \sum_{\bu \neq \bv}{\bX_{\bu} \bY_{\bv} S(\bu)S(\bv) \xi_{\bu \bv}} \right)^2}  \\
&= \E{ \sum_{\substack{\bu_1 \neq \bv_1 \\ \bu_2 \neq \bv_2}} \bX_{\bu_1} \bY_{\bv_1}\bX_{\bu_2} \bY_{\bv_2} S(\bu_1)S(\bv_1)S(\bu_2)S(\bv_2) \xi_{\bu_1 \bv_1} \xi_{\bu_2 \bv_2} } \\
&= \sum_{\substack{\bu_1 \neq \bv_1 \\ \bu_2 \neq \bv_2}} \E{\bX_{\bu_1} \bY_{\bv_1}\bX_{\bu_2} \bY_{\bv_2} S(\bu_1)S(\bv_1)S(\bu_2)S(\bv_2)} \, \cdot \, \E{\xi_{\bu_1 \bv_1}\xi_{\bu_2 \bv_2}} \\
&\leq \frac{1}{D} \sum_{\substack{\bu_1 \neq \bv_1 \\ \bu_2 \neq \bv_2}} \E{\bX_{\bu_1} \bY_{\bv_1}\bX_{\bu_2} \bY_{\bv_2} S(\bu_1)S(\bv_1)S(\bu_2)S(\bv_2)}  \\
&\leq \frac{1}{D} \sum_{\substack{\bu_1 \neq \bv_1 \\ \bu_2 \neq \bv_2}} \E{|\bX_{\bu_1}| |\bY_{\bv_1}| |\bX_{\bu_2}| |\bY_{\bv_2}| S(\bu_1)S(\bv_1)S(\bu_2)S(\bv_2)} \\
&= \frac{1}{D} \, \E{\left(\sum_{\bu \neq \bv \in [d]^p} |\bX_{\bu}| \, |\bY_{\bv}| S(\bu) S(\bv)\right)^2} \leq \frac{3^p - 1}{D} \|\bX\|_2^2 \|\bY\|_2^2 = \frac{3^p - 1}{D} \|\bx\|_2^{2p}\|\by\|_2^{2p}.
\end{align*}
The last line holds using the variance bound from Lemma~\ref{lem:AMS2} over $|\bX|, |\bY|$, coordinate-wise absolute vectors of $\bX, \bY$.
Note that when $p = 1$, this general bound matches the bound of Lemma~\ref{lem:CS}.
\end{proof}

Empirically, it has been shown that normalizing a kernel may improve the performance of SVMs. 
One way to do so is to normalize the data such as $\|\bx\|_2=1$ so that the exact kernel is properly normalized, i.e. $\kappa(\bx, \bx) = \left\langle \bx, \bx \right\rangle^p = 1$. 
Theorem~\ref{thm:main} shows that Tensor Sketches can preserve the normalization of kernels given sufficiently large $D$ random features.

\section{Recent applications of Tensor Sketches}

As polynomial kernels can be used as feature representations in several computational models and applications, Tensor Sketches has emerged as a powerful tool for dimensionality reduction to scale up many computational tasks with high-dimensional datasets.
We briefly describe some recent applications that leverage Tensor Sketches as the core algorithmic component for scalability and efficiency.

\subsection*{Dimensionality Reduction.} 

Tensor Sketch has become a key tool for compressing and processing high-dimensional data, particularly in applications requiring compact representations of polynomial or multilinear feature expansions. Studies such as~\citet{Wang15, shi_multidimensional_tensor_sketch} show its use in reducing storage and computational costs in tensor-based data representations, while preserving task-relevant information. \citet{diao2018sketching} and \citet{malik2018low} explore sketching methods in low-rank approximation and matrix/tensor completion settings, where Tensor Sketch provides fast randomized projections that retain structure in the data. Additionally, works like \citet{cichocki2016tensor} and  \citet{han2020polynomial} incorporate Tensor Sketch into larger frameworks for scalable tensor factorization and deep model compression. These methods highlight Tensor Sketch’s flexibility in balancing approximation quality with substantial reductions in dimensionality and runtime.

In high-dimensional approximation tasks, Tensor Sketch offers near-linear time algorithms for polynomial kernel approximation and randomized linear algebra operations. It enables subspace embeddings and low-distortion projections with provable guarantees, as demonstrated in \citet{charikar2017hashing,ahle2020oblivious}. \citet{Avron14,song2019relative,martinsson2020randomized} further investigate sketching's role in randomized numerical linear algebra, where Tensor Sketch preserves inner products and norms in reduced dimensions. These advances underscore its power in scalable, approximate computation for large-scale scientific and machine learning applications.

\subsection*{Machine Learning and Neural Network Acceleration.}

Tensor Sketch has also seen impactful applications in accelerating machine learning and neural network pipelines, particularly in scenarios demanding high-dimensional interactions such as fine-grained classification and multimodal fusion. By approximating polynomial feature maps efficiently, Tensor Sketch enables compact bilinear pooling and interaction modeling without incurring prohibitive computational costs \citep{Fukui16, Gao16,cui2018learning}. This technique has been particularly effective in deep learning settings, where high-dimensional bilinear features are essential for tasks like few-shot learning and multimodal representation \citep{schwartz2017high, sun2020few}. Tensor Sketch is also widely used to reduce the parameter footprint in attention mechanisms and fusion layers, leading to more scalable architectures for both visual and language understanding \citep{Dai17, li2020sgm}. In addition, recent work has extended its use to kernel-based learning over structured representations, such as those arising in human action recognition and molecule generation \citep{rahmani2017learning, tripp2024tanimoto}. With further improvements in scalability and robustness \citep{fettal2023scalable, kleyko2025principled}, Tensor Sketch continues to serve as a foundational tool for compressing deep architectures, enabling neural models to scale more efficiently in resource-constrained environments.

\subsection*{Graph, Molecular and Biological Data.}

In the realm of graph and network analysis, Tensor Sketch has emerged as an effective tool for compressing high-dimensional representations of graph-structured data. By enabling efficient approximation of polynomial kernels, it supports scalable algorithms for clustering and representation learning in large networks and ordered-neighborhoods graphs~\citep{Draief18,fettal2023scalable}. These methods demonstrate the sketch’s utility in preserving structural and semantic information, making it highly suitable for tasks such as node classification, community detection, and network summarization under tight memory and computational constraints.

Tensor Sketch has proven valuable in bioinformatics and molecular machine learning, where high-dimensional and structured data representations are prevalent. Its ability to approximate polynomial kernels efficiently enables fast similarity estimation and scalable computation over molecular fingerprints and genomic data \citep{joudaki2020fast}. In molecular property prediction, \citet{tripp2024tanimoto} leverage Tensor Sketch to project molecular descriptors into compressed feature spaces that preserve interaction patterns critical for prediction tasks. These approaches illustrate how sketching techniques can maintain predictive accuracy while enabling tractable computation on large molecular databases, facilitating scalable discovery pipelines in drug design and genomics.

\section{Conclusion}\label{sec:conclusion}

In this paper, we introduce a fast and scalable sketching technique for approximating polynomial kernels, enabling efficient training of kernel-based learning algorithms.
By leveraging the connection between tensor products and the fast convolution structure of Count Sketches, our method computes random feature maps in time $\mathcal{O}(n(d + D \log D))$ for $n$ training samples in $\mathbb{R}^d$ and $D$ random features.
We also provided a theoretical analysis bounding the variance of the inner product between two Tensor Sketches, thereby establishing formal guarantees on the accuracy and reliability of the approximation.

\newpage

\acks{We thank Graham Cormode for pointing out the error in the proof of Lemma 6 in the conference version.}
\bibliography{b_Kernel,b_App,b_ConvexOpt,b_RF,b_App_RF,b_Sketch,b_Alg,b_App_TS}

\end{document}